\def\l1{{\lambda}_1}
\newcommand{\f}{\frac}
\def\x1{{\xi }_{xx}}
\def\x2{{\xi }_{yy}}
\def\x3{{\xi }_{xy}}
\def\e1{{\eta }_{xx}}
\def\e2{{\eta }_{yy}}
\def\e3{{\eta }_{xy}}
\def\sign{\text{sgn}}
\newcommand{\ds}{\displaystyle }
\newtheorem{theorem}{Theorem}
\newtheorem{corollary}{Corollary}
\newcommand{\beqn}{\begin{eqnarray*}}
\newcommand{\eeqn}{\end{eqnarray*}}
\newcommand{\beqnn}{\begin{eqnarray}}
\newcommand{\eeqnn}{\end{eqnarray}}
\newcommand{\p}{\partial}
\newcommand{\bb}{\begin{equation}}
\newcommand{\ee}{\end{equation}}
\newcommand{\ba}{\begin{array}}
\newcommand{\ea}{\end{array}}
\newcommand{\R}{\mathbb{R}}
\begin{document}
\pagenumbering{arabic}
\title{\huge \bf Peakon and kink solutions for a system of $0-$Holm-Staley equations}
\author{\rm \large Priscila Leal da Silva$^{1,*}$ and Igor Leite Freire$^{2,**}$ \\
\\
\it $1$. Departamento de Matem\'atica, UFSCar, Brazil\\
\it $2$. Centro de Matem\'atica, Computa\c c\~ao e Cogni\c c\~ao, UFABC, Brazil\\
\rm $*$ pri.leal.silva@gmail.com\\
\rm $**$ igor.freire@ufabc.edu.br/igor.leite.freire@gmail.com}
\date{\ }
\maketitle
\vspace{1cm}
\begin{abstract}
In this paper we consider a two-component system of the Holm-Staley equation with no stretching and a one-parameter nonlinearity in the convection term. Point symmetries are found, conditions for the existence of multipeakon and multikink solutions are determined. Solutions for the 1-peakon and 1-kink are explicitly obtained and they exhibit a behaviour quite different of their analogous in the scalar equation.
\end{abstract}
\vskip 1cm
\begin{center}
{2010 AMS Mathematics Classification numbers:\vspace{0.2cm}\\
 35D30, 37K05\vspace{0.2cm} \\
Key words: Symmetries, peakons, kinks, $b$-equation }
\end{center}
\pagenumbering{arabic}
\newpage

\section{Introduction}

In the work \cite{DHH}, the authors presented the $1$-parameter family of equations
$m_t+um_x=bmu_x,\,m=u-u_{xx}$, today known as the $b$-equation. It is also refereed as Holm-Staley (HS) equation and HS $b$-family of equations as well due to the work of Holm and Staley \cite{holm-staley,holm-staleypla}. The terms $u\,m_x$ and $u_x\,m$ in the HS equation correspond, respectively, to convection and stretching, see \cite{holm-staley} for further details. In a more recent work \cite{Qiao}, the authors deduced a weak multikink solution for the case $b=0$, that is, when the effects of stretching are not considered. It is convenient to name this very particular case and therefore we shall refer to it by the suggestive name $0-$HS equation. Based on these works, in this paper we consider the coupled system, which we accordingly shall name $0-$HS system,
\bb\label{1.2}
m_t=v^bm_x, \quad
n_t =u^bn_x,
\ee
where $b$ is a positive integer, $m=u-u_{xx},\,n=v-v_{xx}$, and deduce peakon and kink solutions for it as done in section \ref{discussion}. Before, in section \ref{main} we prove the main results of the paper.

\section{Main Results}\label{main}

We recall that a Lie point symmetry of the system (\ref{1.2}) is a one-parameter local group of diffeomorphisms preserving the solutions of the system. Then, our first result can be formulated as the following:
\begin{theorem}\label{teo1}
A basis to the Lie point symmetry generators of $(\ref{1.2})$ is given by
\bb\label{2.1}
X_1=\f{\p}{\p x},\,\,X_2=\f{\p}{\p t},\,\,D_b=-bt\f{\p}{\p t}+u\f{\p}{\p u}+v\f{\p}{\p v}
\ee
for any $b$. If $b=1$, in addition to $(\ref{2.1})$ one has a fourth generator given by
\bb\label{2.2}
X_3=-t\f{\p}{\p x}+\f{\p}{\p u}+\f{\p}{\p v}.
\ee
\end{theorem}

\begin{proof}
It follows immediately from the invariance condition, see \cite{bk, ol} and can be directed computed from well known computational routines such as \cite{stelios1,stelios2}.
\end{proof}

Let $\phi$ be a continuous and real valued function defined on an interval $I$. One says that $\phi$ has a {\it peak} at a point $x_0\in I$ if: (1) $\phi$ is smooth on both $I\cap\{z\in\R;\,z<x_0\}$ and $I\cap\{z\in\R;\,z>x_0\}$; (2) the following relations hold $0\neq\lim\limits_{\epsilon\rightarrow 0^+}\phi'(x_0+\epsilon)=-\lim\limits_{\epsilon\rightarrow 0^+}\phi'(x_0-\epsilon)\neq\pm\infty.$

A continuous function $u(x,t)$ is said to have a peak at a point $(x_0,t_0)$ if at least one of the functions $x\mapsto u(\cdot,t_0)$ or $t\mapsto u(x_0,\cdot)$ has a peak at $x_0$ or $t_0$, respectively. For further details, see \cite{pridcds,lenells05}. Finally, we say that a (system of) differential equation has peakon solutions if it admits a solution having a peak in the sense of what has just been discussed. A (system of) differential equation admits a multipeakon solution if it has a solution which is a linear combination of functions having a peak. 

From now on, the functions $p,\,p_i,\,P,\,P_j,\,q,\,q_i,\,Q$ and $Q_j$, where $i=1,\dots,N$ and $j=1,\dots,M$, are smooth functions depending on $t$ and we assume that $(p,q)\not\equiv (0,0)$ and $(P,Q)\not\equiv (0,0)$ as well. For sake of simplicity, we do not write its dependence with respect to the independent variable $t$.

\begin{theorem}\label{teo2}
The pair $(u,v)$ of functions given by 
\bb\label{2.3}
u(x,t) = \sum\limits_{i=1}^N p_ie^{-|x-q_i|}, \quad v(x,t) = \sum\limits_{i=1}^M P_ie^{-|x-Q_i|},
\ee
is a multipeakon solution of the system $(\ref{1.2})$ if and only if the dynamical system
\begin{align}\label{2.4}
p_i' &= bp_i\left(\sum\limits_{j=1}^M P_je^{-|q_i-Q_j|}\right)^{b-1}\left(\sum\limits_{k=1}^MP_k\sign(q_i-Q_k)e^{-|q_i-Q_k|}\right),\nonumber\\
P_i' &= bP_i\left(\sum\limits_{j=1}^N p_je^{-|Q_i-q_j|}\right)^{b-1}\left(\sum\limits_{k=1}^Np_k\sign(Q_i-q_k)e^{-|Q_i-q_k|}\right),\\
q_i' &= -\left(\sum\limits_{j=1}^MP_je^{-|q_i-Q_j|}\right)^b, \quad Q_i' = -\left(\sum\limits_{j=1}^Np_je^{-|Q_i-q_j|}\right)^b\nonumber
\end{align}
is satisfied.
\end{theorem}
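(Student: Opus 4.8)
The plan is to work in the space of distributions, exploiting the elementary identity $(1-\p_x^2)e^{-|x-q|}=2\delta(x-q)$. Applying it termwise to the ansatz $(\ref{2.3})$ immediately gives $m=2\sum_{i=1}^N p_i\,\delta(x-q_i)$ and $n=2\sum_{i=1}^M P_i\,\delta(x-Q_i)$, so that both $m$ and $n$ are finite combinations of Dirac masses whose weights and positions are exactly the unknowns $p_i,q_i$ and $P_i,Q_i$. This reduces the system $(\ref{1.2})$ to an identity between distributions supported on the moving points $q_i$ (resp.\ $Q_i$), which I will then decompose along the linearly independent family $\{\delta(x-q_i),\delta'(x-q_i)\}$.

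First I would compute the two sides of the first equation $m_t=v^bm_x$ separately. Differentiating $m$ in $t$ and using $\p_t\delta(x-q_i(t))=-q_i'\,\delta'(x-q_i)$ yields
\[
m_t = 2\sum_{i=1}^N\big[p_i'\,\delta(x-q_i)-p_iq_i'\,\delta'(x-q_i)\big],
\]
while $m_x=2\sum_{i=1}^N p_i\,\delta'(x-q_i)$. The right-hand side is a product of the continuous function $v^b$ with the distribution $m_x$, which I interpret through the rule $f(x)\delta'(x-q_i)=f(q_i)\delta'(x-q_i)-f'(q_i)\delta(x-q_i)$. Taking $f=v^b$ and reading off, directly from $(\ref{2.3})$, the values $v^b(q_i)=\big(\sum_j P_je^{-|q_i-Q_j|}\big)^b$ and $(v^b)'(q_i)=bv^{b-1}(q_i)v_x(q_i)$ with $v_x(q_i)=-\sum_k P_k\,\sign(q_i-Q_k)e^{-|q_i-Q_k|}$, turns $v^bm_x$ into an explicit combination of $\delta(x-q_i)$ and $\delta'(x-q_i)$.

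Matching coefficients then finishes the argument and, crucially, gives the equivalence in both directions: because the distributions $\delta(x-q_i)$ and $\delta'(x-q_i)$ attached to distinct points are linearly independent, the equality $m_t=v^bm_x$ holds if and only if the $\delta'(x-q_i)$-coefficients agree, giving $q_i'=-\big(\sum_jP_je^{-|q_i-Q_j|}\big)^b$, and the $\delta(x-q_i)$-coefficients agree, giving, after substituting $v_x(q_i)$, exactly the stated $p_i'$. Running the identical computation on the symmetric equation $n_t=u^bn_x$ produces the evolution of $P_i$ and $Q_i$, completing the system $(\ref{2.4})$.

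The delicate point is the meaning of the nonlinear term $v^bm_x$: multiplying $\delta'$ by a function is only legitimate when that function is $C^1$ near the support of $\delta'$. Here $v^b$ is smooth except at the points $Q_j$, so the product rule I used is valid precisely when $q_i\neq Q_j$ for all $i,j$, i.e.\ when the peaks of the two components do not collide. I would therefore carry out the computation under this genericity assumption; this is also the feature that makes the coupled system cleaner than the scalar case, where a self-interaction term would force an averaging convention for the jump of $u_x$ at a peak. Collision times, where two centers coincide and both the linear independence and the product rule break down, are the only genuinely problematic situation and would have to be excluded or handled by a separate limiting argument.
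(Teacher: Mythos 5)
Your proof is correct and follows essentially the same route as the paper's: take distributional derivatives of the ansatz to reduce $m$ and $n$ to sums of Dirac masses, substitute into (\ref{1.2}), and match the coefficients of $\delta(x-q_i)$ and $\delta'(x-q_i)$ (your overall factor of $2$, which the paper drops, cancels from both sides). The only difference is that you make explicit what the paper leaves implicit --- the rule $f\,\delta'(x-q_i)=f(q_i)\,\delta'(x-q_i)-f'(q_i)\,\delta(x-q_i)$ and the non-collision condition $q_i\neq Q_j$ under which it applies, a caveat the paper silently absorbs into the convention $\sign(0)=0$ (compare Corollary \ref{cor1}, where $q=Q$ is treated separately).
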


\begin{proof}
Let us consider $u$ and $v$ as in (\ref{2.3}). Then, taking their distributional derivatives, we conclude that
\begin{equation}\label{2.5}
m = \sum\limits_{i=1}^Np_i\delta(x-q_i),\quad n =\sum\limits_{i=1}^MP_i\delta(x-Q_i).
\end{equation}
Substituting (\ref{2.3}) and (\ref{2.5}) into (\ref{1.2}) we obtain (\ref{2.4}). Conversely, supposing that $p_i$, $P_i$, $q_i$ and $Q_i$ are given by (\ref{2.4}) and $(u,v)$ by (\ref{2.3}), a straightforward calculation of its weak derivatives shows that such pair is a solution of (\ref{1.2}) in the distributional sense.
\end{proof}

\begin{corollary}\label{cor1}
If $u(x,t)=pe^{-|x-q|}$, $v(x,t)=Pe^{-|x-Q|}$ and if $|\sign{(q-Q)}|=1$, then
$p^b+P^b=\kappa,\quad q'=\left(\kappa/p^b-1\right)Q',$
where $\kappa$ is a constant. On the other hand, if $q=Q$, then $p=c_1$, $P=c_2$ and $c_1$ and $c_2$ are constants satisfying the constraint $c_1^b=c_2^b$.
\end{corollary}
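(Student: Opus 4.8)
The plan is to specialize the dynamical system $(\ref{2.4})$ to the case $N=M=1$ and then read off both assertions by direct computation, since Theorem \ref{teo2} already guarantees that a pair $(u,v)$ of the stated single-peakon form solves $(\ref{1.2})$ exactly when the reduced system holds. Setting $N=M=1$ and using $\sign(Q-q)=-\sign(q-Q)$ together with $|Q-q|=|q-Q|$, the four equations collapse to
\bb
p'=bpP^b\sign(q-Q)e^{-b|q-Q|},\quad P'=-bPp^b\sign(q-Q)e^{-b|q-Q|},
\ee
together with $q'=-P^be^{-b|q-Q|}$ and $Q'=-p^be^{-b|q-Q|}$. I would record these four reduced equations first, since everything else follows from them.

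For the branch $|\sign(q-Q)|=1$ (that is, $q\neq Q$, where the sign is locally constant), I would produce the conserved quantity by differentiating $p^b+P^b$: one has $\f{d}{dt}(p^b+P^b)=bp^{b-1}p'+bP^{b-1}P'$, and substituting the first pair of equations gives the two contributions $b^2p^bP^b\sign(q-Q)e^{-b|q-Q|}$ and its exact negative, so they cancel and $p^b+P^b=\kappa$ is constant. The relation between $q'$ and $Q'$ is then immediate: from $q'=-P^be^{-b|q-Q|}$ and $Q'=-p^be^{-b|q-Q|}$ one gets $q'/Q'=P^b/p^b$, valid wherever $p\neq0$ so that $Q'\neq0$, and rewriting $P^b=\kappa-p^b$ yields $q'=(\kappa/p^b-1)Q'$, as claimed.

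For the branch $q=Q$, I would adopt the convention $\sign(0)=0$. The first pair of equations then forces $p'=P'=0$, so $p=c_1$ and $P=c_2$ are constants, while $e^{-b|q-Q|}=1$ reduces the last pair to $q'=-c_2^b$ and $Q'=-c_1^b$. Differentiating the identity $q=Q$ gives $q'=Q'$, whence $c_1^b=c_2^b$.

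The computation is essentially mechanical, so there is no serious obstacle; the only points that require care are the sign bookkeeping in the cancellation that yields $\kappa$ (this is where an error would most easily creep in), and the treatment of $\sign(0)$ in the degenerate branch. I would also flag two standing hypotheses that must be made explicit: the nondegeneracy $p\neq0$ needed to divide by $Q'$ in the first branch, and the reading of ``$q=Q$'' as an identity in $t$ rather than at an isolated instant, so that $q'=Q'$ genuinely follows.
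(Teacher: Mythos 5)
Your proposal is correct and follows essentially the same route as the paper: reduce $(\ref{2.4})$ to the $N=M=1$ case and read off the conservation law $p^b+P^b=\kappa$ and the ratio $q'/Q'=P^b/p^b$ (the paper phrases the cancellation via the identity $p'/P'=-P^{b-1}/p^{b-1}$ rather than by differentiating $p^b+P^b$ directly, but the computation is the same). Your explicit treatment of the $q=Q$ branch, including deriving $c_1^b=c_2^b$ from $q'=Q'$, and your flagging of the nondegeneracy $p\neq0$ are welcome clarifications of details the paper leaves implicit.
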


\begin{proof}
In this case, system (\ref{2.4}) is reduced to 
\begin{align}
p'=b\,\sign{(q-Q)}p\,P^b\,e^{-b|q-Q|},\,\,P'=b\,\sign{(Q-q)}P\,p^b\,e^{-b|Q-q|},\nonumber\\
q'=-P^b\,e^{-b|q-Q|},\,\,Q'=-p^b\,e^{-b|Q-q|}.\nonumber
\end{align}
It follows from the fact that 
$p'/P'=-pq'/PQ'=-P^{b-1}/p^{b-1}$, if $q\neq Q$, for all $t$, or, in the case $q=Q$, from the fact that both $p$ and $P$ are constants.
\end{proof}

A kink is a monotonic and bounded function. One says that a differential equation has a kink solution if it has a solution that is a kink. As in the multipeakon case, we say that an equation admits a multikink solution if it has a solution that is a linear combination of kink functions.
\begin{theorem}\label{teo3}
The pair $(u,v)$ given as
\bb\label{2.6}
u =\sum\limits_{i=1}^Nc_i \sign (x-p_i)(e^{-|x-p_i|}-1),\quad v =\sum\limits_{i=1}^M\tilde{c}_i \sign (x-q_i)(e^{-|x-q_i|}-1),
\ee
where $(c_1,\dots,c_N)$ and $(\tilde{c}_1,\dots,\tilde{c}_M)$ are constants,
is a multikink solution of the system $(\ref{1.2})$ if, and only if, the functions $p_i$ and $q_i$ satisfy the system
\begin{align}\label{2.7}
p_i' = \ds{-\left(\sum\limits_{j=1}^M\tilde{c}_j\sign(p_i-q_j)(e^{-|p_i-q_j|}-1)\right)^b},\quad i=1,\dots, N, \\
q_i' = \ds{-\left(\sum\limits_{j=1}^Nc_j\sign(q_i-p_j)(e^{-|q_i-p_j|}-1)\right)^b},\quad i=1,\dots, M.\nonumber
\end{align}
\end{theorem}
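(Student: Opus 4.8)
The plan is to follow the strategy used for Theorem \ref{teo2}, replacing the peakon profile by the kink profile and tracking the distributional derivatives carefully. First I would analyze a single kink $\phi_i(x) = \sign(x-p_i)(e^{-|x-p_i|}-1)$. A direct computation shows that $\phi_i$ is continuous and in fact $C^1$: the left and right limits of $\phi_i'$ at $p_i$ both equal $-1$, so $\phi_i'$ has no jump. Consequently the distributional second derivative $\phi_i''$ carries no Dirac mass, and one finds $\phi_i - \phi_i'' = -\sign(x-p_i)$. Summing, the quantities $m = u - u_{xx}$ and $n = v - v_{xx}$ reduce to the step functions
\begin{equation*}
m = -\sum_{i=1}^N c_i \sign(x-p_i), \qquad n = -\sum_{i=1}^M \tilde{c}_i \sign(x - q_i).
\end{equation*}
This is the essential structural difference from the peakon case: there $m$ and $n$ were sums of Dirac masses, whereas here they are bounded step functions.

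Next I would compute the weak derivatives appearing in $(\ref{1.2})$. Using $\partial_x \sign(x - p_i) = 2\delta(x - p_i)$ and, via the chain rule, $\partial_t \sign(x - p_i(t)) = -2 p_i' \delta(x - p_i)$, I obtain
\begin{equation*}
m_t = 2\sum_{i=1}^N c_i p_i' \delta(x - p_i), \qquad m_x = -2\sum_{i=1}^N c_i \delta(x - p_i),
\end{equation*}
and analogous expressions for $n_t$ and $n_x$. The right-hand side of the first equation of $(\ref{1.2})$ is the product $v^b m_x$. Since $v$ is continuous (indeed smooth away from the finitely many points $q_j$), the product of the continuous function $v^b$ with each Dirac mass is well defined and equals $v(p_i,t)^b \delta(x - p_i)$; this is exactly the step at which continuity of the kink profile is needed, and it is cleaner than the peakon case, where the product involved derivatives of Dirac masses. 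Evaluating $v$ at $p_i$ gives $v(p_i,t) = \sum_{j=1}^M \tilde{c}_j \sign(p_i - q_j)(e^{-|p_i - q_j|} - 1)$.

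Finally, substituting into $m_t = v^b m_x$ yields
\begin{equation*}
2\sum_{i=1}^N c_i p_i' \delta(x - p_i) = -2\sum_{i=1}^N c_i v(p_i,t)^b \delta(x - p_i),
\end{equation*}
where the common factor $2$ cancels. Assuming the $p_i$ are distinct, the Dirac masses $\delta(x - p_i)$ are linearly independent, so this identity holds if and only if the coefficients agree for each $i$, i.e. $p_i' = -v(p_i,t)^b$ (for $c_i \neq 0$), which is precisely the first equation of $(\ref{2.7})$. The symmetric computation on $n_t = u^b n_x$ gives the second equation. Because both directions of the biconditional amount to the same coefficient-matching, the ``if and only if'' follows at once.

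The main obstacle I anticipate is the bookkeeping in the first step, namely verifying that the kink profile is $C^1$ so that $m$ and $n$ contain no Dirac contributions. Getting this wrong, for instance by spuriously introducing a delta in $m$, would change the entire form of the resulting ODE system and would destroy the clean product structure used in the second step. A secondary point requiring care is the justification of the product $v^b m_x$ as a distribution, which rests on the continuity of $v$ at the support points $p_i$; this remains valid even when some $p_i$ coincides with some $q_j$, since $v$ is continuous there.
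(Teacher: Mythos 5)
Your proposal is correct and follows essentially the same route as the paper's (very terse) proof: compute $m=-\sum_i c_i\sign(x-p_i)$, $n=-\sum_i \tilde c_i\sign(x-q_i)$ as weak derivatives, substitute into $(\ref{1.2})$, and match coefficients of the Dirac masses. You supply the details the paper omits --- the $C^1$ regularity of the kink profile, the well-definedness of $v^b m_x$ via continuity of $v$ at the points $p_i$, and the (implicit but necessary) assumptions that the $p_i$ are distinct and the $c_i$ nonzero --- all of which check out.
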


\begin{proof}
As in the multipeakon case, calculating the weak derivatives of (\ref{2.6}) we obtain 
$m=-\sum\limits_{i=1}^Nc_i\sign(x-p_i)$ and $n=-\sum\limits_{i=1}^M\tilde{c}_i\sign(x-q_i)$.  Substituting these expressions into (\ref{1.2}), we conclude that it is a solution if (\ref{2.7}) holds. Vice-versa, supposing that $p_i$ and $q_i$ are solutions of (\ref{2.7}) and considering the linear combination (\ref{2.6}), then we conclude that it is a solution of (\ref{1.2}).
\end{proof}

\begin{corollary}\label{cor2}
If $u(x,t)=a_1\,\sign{(x-p)}(e^{-|x-p|}-1)$ and $v(x,t)=a_2\,\sign{(x-q)}\,(e^{-|x-q|}-1)$, with constants $a_1a_2\neq0$, and $A_b:=[(-1)^ba_2^b-a_1^b]/a_1^b$, then
\bb\label{auxk}
(p-q)'=(-1)^{b+1}\,a_1^b\,A_b\,\sign{(p-q)}^b\,(e^{-|p-q|}-1)^b.
\ee
\end{corollary}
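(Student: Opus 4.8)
The plan is to specialize Theorem~\ref{teo3} to the single-kink case $N=M=1$ and then subtract the two resulting scalar equations. Putting $c_1=a_1$, $\tilde c_1=a_2$, $p_1=p$ and $q_1=q$ in (\ref{2.7}), the system collapses to
\[
p'=-\left(a_2\,\sign(p-q)(e^{-|p-q|}-1)\right)^b,\qquad
q'=-\left(a_1\,\sign(q-p)(e^{-|q-p|}-1)\right)^b .
\]
Everything then reduces to a careful bookkeeping of sign factors in the difference $p'-q'$, so there is no genuine analytic obstacle, only an algebraic one.

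First I would record the two elementary symmetries $|p-q|=|q-p|$ and $\sign(q-p)=-\sign(p-q)$, which allow the common factors $(e^{-|p-q|}-1)^b$ and $\sign(p-q)^b$ to be pulled out of both $p'$ and $q'$. Abbreviating $s:=\sign(p-q)$ and $E:=e^{-|p-q|}-1$, this gives $p'=-a_2^b\,s^b E^b$ and $q'=-a_1^b(-1)^b\,s^b E^b$, whence
\[
(p-q)'=p'-q'=\left((-1)^b a_1^b-a_2^b\right)s^b E^b .
\]

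The only slightly delicate step is to rewrite the scalar prefactor $(-1)^b a_1^b-a_2^b$ in terms of the quantity $A_b=[(-1)^b a_2^b-a_1^b]/a_1^b$ introduced in the statement. Here I would use that $2b$ is even, so that $(-1)^{2b+1}=-1$; expanding
\[
(-1)^{b+1}a_1^b A_b=(-1)^{b+1}\left((-1)^b a_2^b-a_1^b\right)=(-1)^{2b+1}a_2^b-(-1)^{b+1}a_1^b=-a_2^b+(-1)^b a_1^b ,
\]
which is precisely the prefactor obtained above. Substituting back and recalling $s^b=\sign(p-q)^b$ and $E^b=(e^{-|p-q|}-1)^b$ then yields (\ref{auxk}). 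The main point to watch is the interplay between the $(-1)^b$ arising from $\sign(q-p)^b$ in $q'$ and the $(-1)^{b+1}$ appearing in the target formula; once the parity identity $(-1)^{2b+1}=-1$ is invoked, the two expressions match exactly and the corollary follows.
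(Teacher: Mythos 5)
Your proof is correct and takes essentially the same route as the paper: specialize (\ref{2.7}) to $N=M=1$, use $\sign(q-p)^b=(-1)^b\,\sign(p-q)^b$ to write both $p'$ and $q'$ over the common factor $\sign(p-q)^b(e^{-|p-q|}-1)^b$, subtract, and check that the resulting prefactor $(-1)^ba_1^b-a_2^b$ equals $(-1)^{b+1}a_1^bA_b$. Your sign bookkeeping is, if anything, slightly more careful than the paper's intermediate display for $q'$.
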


\begin{proof}
Similarly as in the Corollary \ref{cor1}, system (\ref{2.7}) becomes 
\begin{align}\label{pqder}
p'=-a_2^b\,\left(\sign{(p-q)}\,(e^{-|p-q|}-1)\right)^b,\,\,q'=(-1)^{b+1}a_1^b\,\left(\sign{(q-p)}\,(e^{-|q-p|}-1)\right)^b.
\end{align}
The result follows after little effort from the fact that
$
(p-q)'=[-a_2^b+(-1)^b\,a_1^b]\,\left(\sign{(p-q)}\,(e^{-|p-q|}-1)\right)^b.
$
\end{proof}

\begin{corollary}\label{cor3}
If $a_1^b=(-1)^b\,a_2^b$, then the pair
\bb\label{pqkink}
p=-a_2^b\,(\sign{(x_0)}\,(e^{-|x_0|}-1))^b\,t+x_0,\quad q=-a_2^b\,(\sign{(x_0)}\,(e^{-|x_0|}-1))^b\,t
\ee
is a solution of the system $(\ref{pqder})$, where $x_0\in\mathbb{R}$.
\end{corollary}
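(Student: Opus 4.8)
The plan is to substitute the affine pair $(\ref{pqkink})$ directly into the reduced system $(\ref{pqder})$ and check that both equations hold identically in $t$. The two features of the ansatz that drive everything are that $p$ and $q$ share the same slope $c:=-a_2^b\,(\sign(x_0)(e^{-|x_0|}-1))^b$, so that $p'=q'=c$, and that their difference $p-q\equiv x_0$ is independent of $t$. Thus the verification reduces to showing that, after replacing every occurrence of $p-q$ by the constant $x_0$, the right-hand sides of both equations in $(\ref{pqder})$ collapse to the single constant $c$.

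For the first equation this is immediate: substituting $p-q=x_0$ into $-a_2^b(\sign(p-q)(e^{-|p-q|}-1))^b$ reproduces $c$ verbatim, so $p'=c$ is satisfied. For the second equation I would substitute $q-p=-x_0$, use $\sign(-x_0)=-\sign(x_0)$ and $|-x_0|=|x_0|$ to extract a factor $(-1)^b$ from the bracket, and then fold it into the prefactor $(-1)^{b+1}a_1^b$. At this point the right-hand side becomes a constant multiple of $(\sign(x_0)(e^{-|x_0|}-1))^b$, and the hypothesis $a_1^b=(-1)^b a_2^b$ is exactly the relation needed to make that constant equal to $c=q'$. Finally I would note that, having a constant right-hand side, each equation integrates to an affine function of $t$, and the additive constants are pinned down by the placement $p(0)=x_0$, $q(0)=0$ recorded in $(\ref{pqkink})$.

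A cleaner organization, which I would probably adopt, is to lean on Corollary \ref{cor2}: by the very definition of $A_b$, the assumption $a_1^b=(-1)^b a_2^b$ forces $A_b=0$, whence $(p-q)'=0$ and $p-q$ is constant; matching this constant to $x_0$ and integrating the now-autonomous $p$-equation then yields $(\ref{pqkink})$. The one genuinely delicate point, in either route, is the parity bookkeeping: one must carry the factor $(-1)^b$ coming from $\sign(q-p)^b=(-1)^b\,\sign(p-q)^b$ and confirm that it cancels correctly against $(-1)^{b+1}$ and against the sign built into the hypothesis, so that the $p$- and $q$-equations are simultaneously satisfied for every positive integer $b$; away from this sign accounting (and the trivial subcase $x_0=0$, where $c=0$ and both functions are constant) the argument is a short substitution.
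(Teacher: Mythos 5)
Your second route --- deduce $A_b=0$ from the hypothesis, conclude from (\ref{auxk}) that $p-q$ is a constant $x_0$, and integrate the now-autonomous system --- is exactly the paper's proof, so on that score the proposal matches. The direct-substitution route is also legitimate in principle, and your reduction of the whole verification to checking that both right-hand sides of (\ref{pqder}) equal the common slope $c=-a_2^b(\sign(x_0)(e^{-|x_0|}-1))^b$ is the right way to organise it.

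However, the ``parity bookkeeping'' that you correctly single out as the one delicate point, and then defer, does not actually close if you take the second equation of (\ref{pqder}) at face value. Carrying out the fold you describe --- extracting $(-1)^b$ from $\bigl(\sign(-x_0)(e^{-|x_0|}-1)\bigr)^b$ and multiplying it into the prefactor --- gives $(-1)^{2b+1}a_1^b=-a_1^b$, and the hypothesis $a_1^b=(-1)^ba_2^b$ then turns the right-hand side into $(-1)^{b+1}a_2^b(\sign(x_0)(e^{-|x_0|}-1))^b$, which equals $c$ only when $b$ is even. The resolution is that the printed second equation of (\ref{pqder}) applies the orientation flip twice: rederiving it from (\ref{2.7}) with $N=M=1$ gives $q'=-a_1^b(\sign(q-p)(e^{-|q-p|}-1))^b$, equivalently $q'=(-1)^{b+1}a_1^b(\sign(p-q)(e^{-|p-q|}-1))^b$; the $(-1)^{b+1}$ prefactor and the $(q-p)$ arguments should not both appear. (This corrected form is the one consistent with the displayed identity for $(p-q)'$ in the proof of Corollary \ref{cor2} and hence with (\ref{auxk}).) With that reading your substitution yields $q'=(-1)^{b+1}(-1)^ba_2^b(\sign(x_0)(e^{-|x_0|}-1))^b=c$ for every positive integer $b$, and the verification is complete. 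So the gap is small but real: the step you labelled ``confirm that it cancels correctly'' cannot be confirmed against (\ref{pqder}) as printed and requires going back to (\ref{2.7}).
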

\begin{proof}
It follows from the hypothesis that $A_b=0$. Then, equation (\ref{auxk}) implies that $p=q+x_0$, where $x_0$ is a constant of integration. Solving the system (\ref{pqder}) we obtain the desired result.
\end{proof}
\section{Discussion and Conclusion}\label{discussion}

In Theorem \ref{teo1} we carried out the group classification of system (\ref{1.2}). With exception of the case $b=1$, the symmetries of (\ref{1.2}) are reduced to translations in $x$, $t$ and the scaling $(x,t,u,v)\mapsto (x,e^{-\epsilon b}t,e^{\epsilon}u,e^{\epsilon}v)$. Likewise the classification given in \cite{anco} (see Proposition 1.1), case $b=1$ is somewhat special when the symmetries are taken into account. 

It is worth emphasising that system (\ref{1.2}) also admits other symmetries such as $(x,t,u,v)\mapsto(-x,-t,u,v)$, $(x,t,u,v)\mapsto(x,t,v,u)$, which reflects the fact that it is symmetric with respect to the dependent variables, $(x,t,u,v)\mapsto(x,t,-u,-v)$, if $b$ is even, and $(x,t,u,v)\mapsto(x,-t,-u,-v)$ provided that $b$ is odd.

Some 1-peakon solutions can be obtained by using Corollary \ref{cor1} and the symmetries mentioned in the two previous paragraphs. For instance, if we assume that $q=Q$ and $b$ is even, then we have the solutions $u=-v=c^{1/b}e^{-|x-ct+x_0|}$ or $u=v=c^{1/b}e^{-|x-ct+x_0|}$. The latter pair of functions also provides a solution in the case $b=1,\,3,\,5\cdots$.

With respect to kinks, observe that if $p=q$ in Corollary \ref{cor2}, then $p$ and $q$ are constants, which means that the corresponding solution of system (\ref{1.2}) is
\begin{align}\label{p4.5}
u = c\,\sign(x-k)(e^{-|x-k|}-1), \quad v = \tilde{c}\,\sign(x-k)(e^{-|x-k|}-1),
\end{align}
which is stationary. This remark shows that the coupled system (\ref{1.2}) maintains the property of having stationary 1-kink solutions that was already known for the $0-$HS equation,
see \cite{Qiao}. Actually, in the same work, the authors showed that the equation would only admit a weak 1-kink of the form (\ref{2.6}) if it is stationary. We, however, present a dramatic difference between the $0-$HS equation and the $0-$HS system (\ref{1.2}).  We shall exhibit a non-stationary weak 1-kink solution for the system. 

Corollary \ref{cor3} provides an immediate solution of the 1-kink case, verily,
\bb\label{3.2}
u(x,t)=a_1\,\sign{(x-p)}(e^{-|x-p|}-1),\quad v(x,t)=a_2\,\sign{(x-q)}(e^{-|x-q|}-1),
\ee
where $p$ and $q$ are given by (\ref{pqkink}). Observe that if we choose $x_0=0$ in (\ref{pqkink}) we would then obtain $p=q=0$, which would provide again a stationary solution, a result totally in agreement with that found in \cite{Qiao}. On the other hand, if we chose $x_0\neq0$, from (\ref{pqkink}) we conclude that $p$ and $q$ depends on $t$ and, consequently, (\ref{3.2}) are 1-kink solutions of the $0-$HS system depending on $t$. 

System (\ref{1.2}) is a sort of multicomponent extension of the equations considered in \cite{anco,nois1,nois2} when the ``corresponding stretching'' is neglected. As a consequence of this assumption, and similarly to \cite{Qiao}, we note the emergence of weak kink solutions, although there are different behaviours of the 1-kink solutions of the system when compared to those of the scalar equation considered in \cite{Qiao}. Similarly to what happened in \cite{igordiego} and references thereof, in our case we see that the system seems to have more interesting and challenging properties than the scalar equation. This shows that multicomponent generalisations of the equations considered in \cite{holm-staley} may have fascinating and unexpected properties to be analysed.

\section*{Acknowledgements}

P. L. da Silva is thankful to CAPES for her PhD and post-doc scholarship. I. L. Freire's work is partially supported by CNPq (grant no 308941/2013-6).

\end{document}